\theoremstyle{plain}
\newtheorem{theorem}{Theorem}
\newtheorem{proposition}{Proposition}
\newtheorem{example}{Example}
\theoremstyle{definition}
\DeclareMathOperator{\comp}{comp}
\DeclareMathOperator{\comb}{comb}
\DeclareMathOperator{\geom}{geom}
\DeclareMathOperator{\PARITY}{PARITY}
\DeclareMathOperator{\depth}{depth}
\begin{document}

\title{Comparison communication protocols}
\author{Michael R. Klug}
\maketitle

\begin{abstract}
We introduce a restriction of the classical 2-party deterministic communication protocol where Alice and Bob are restricted to using only comparison functions.  We show that the complexity of a function in the model is, up to a constant factor, determined by a complexity measure analogous to Yao's tiling number, which we call the geometric tiling number which can be computed in polynomial time.  As a warm-up, we consider an analogous restricted decision tree model and observe a 1-dimensional analog of the above results.     
\end{abstract} 


\section{Introduction} \label{sec:intro}

Communication complexity is the study of several related models of computation that measure and attempt to minimize the amount of information that needs to be communicated between different parties to compute a function.  It has been remarkably successful within computational complexity with strong lower bounds having been proven in various models of communication complexity.  Many of these results have had applications throughout theoretical computer science, including leading to lower bounds in other concrete models of computation (for example circuit depth).  In this note, we study a variation of the usual 2-party deterministic communication protocol where Alice and Bob are restricted to only using comparison functions.  As a warm-up, we study the analogous restriction of the usual deterministic decision tree model where the query node functions are restricted to being comparison functions.  This restricted decision tree model ``collapses'' in the sense that there is a simple to compute lower bound for the complexity in this model that also is an upper bound and hence determining the complexity of a function in this model is straight-forward.  Up a dimension, an analogous lower bound holds for our restricted communication complexity model and, up to a constant factor, this model also collapses in the same sense that the lower bound ends up being an upper bound.  This bound for the complexity of our model is similar to the tiling number introduced by Yao \cite{yao1979}.  We then show that this modified tiling number can be computed in polynomial time.  

We now elaborate on some of the specifics of the models under discussion although precise details will be withheld until the relevant sections.  Let $f : X \times Y \to Z$.  In the classical 2-party deterministic communication complexity model, Alice and Bob are each given half of the input, $x \in X$ and $y \in Y$, respectively, and asked to together compute $f(x,y)$ using some protocol $\mathcal{P}$.  The size of $\mathcal{P}$ is the worst-case number of bits that need to be exchanged between Alice and Bob in order to arrive at $f(x,y)$ and the complexity of $f$, denoted $C(f)$ is the minimal size of a protocol $\mathcal{P}$ that computes $f$.  For simplicity, we restrict our attention to functions of the form $f: \{0,1\}^n \times \{0,1\}^n \to \{0,1\}$ together with the lexicographical ordering on $\{0,1\}^n$.



A lower bound for $C(f)$ dating back to Yao's original article is the invariant $\chi(f)$ which is the minimum number of tilings of any $f$-monochromatic tiling of $\{0,1\}^n \times \{0,1\}^n$.  By taking any protocol $\mathcal{P}$ that computes $f$ and  labeling every element $(x,y) \in X \times Y$ by the communication history between Alice and Bob when the protocol is run on the input $(x,y)$, Yao observed that the subset of $X \times Y$ with a given communication history is not arbitrary, but rather is always combinatorial rectangles, thus leading to the bound $C(f) \geq \log \chi(f)$.  Many other lower bound techniques in this setting follow by giving lower bounds for $\chi(f)$ (for example, the size of a fooling set for $f$, the rank of $M(f)$, or the reciprocal of the discrepancy of $M(f)$ all yield lower bounds for $C(f)$ by giving lower bounds for $\chi(f)$ -- see \cite{nisan_kushilevitz}).  

We introduce here a variation of (in fact, a weakening of) the deterministic 2-party communication complexity model which we call the \emph{comparison communication complexity model}.  A comparison function is a function of the form $\theta_x : \{0,1\}^n \to \{0,1\}$ with $x \in \{0,1\}^n$ and $\theta_x(y) = 1$ if and only if $y \geq x$.  In the comparison communication complexity model, Alice and Bob's computational power is limited to the use of comparison functions (as opposed to arbitrary functions which are allowed in the classical model).  We denote the complexity of a function in this model by $C^{\comp}(f)$.  We find a variation of Yao's tiling number, using geometric rectangles in place of combinatorial rectangles (see section \ref{sec:comm}), which we denote by $\chi^{\geom}(f)$.   We prove that up to a logarithm and a constant factor $C(f)$ is equal to $\chi^{\geom}(f)$.  Motivated by this, we give a polynomial-time algorithm for computing $\chi^{\geom}(f)$ that comes from a more general consideration of finding the minimal number of rectangles needed to tile a finite planar region that can be tiled by rectangles.  

As a warm-up to this, we study a variation on the decision tree model, where instead of allowing arbitrary coordinate query functions, the query functions are comparison functions.  We denote the complexity of a function with respect to this model by $D^{\comp}(f)$.  We find an explicit formula for $D^{\comp}(f)$ in terms of a simple to compute complexity measure $\mu(f)$, which is essentially a one-dimensional version of $\chi^{\geom}$.

In section \ref{sec:tree}, we discuss the comparison decision tree model, give a formula for computing the complexity of a function in this model which can be computed in polynomial time, compare the complexity of a few functions in this model with the complexity of those functions in the classical decision tree model, and show that many functions have large complexity for this model.  In section \ref{sec:comm}, we introduce and prove that analogous results for the comparison communication complexity model where we show that the complexity of a function in this model is $\Theta(\chi^{\geom}(f))$ where $\chi^{\geom}$ is a complexity measure that is analogous to Yao's tiling number $\chi$ but with geometric rectangles used in place of combinatorial rectangles. 


\section{Comparison decision trees}\label{sec:tree}

In the classical decision tree model of computation, one considers decision trees $\mathcal{T}$ that compute a given boolean function $f: \{0,1\}^n \to \{0,1\}$ where each non-leaf vertex $v$ in $\mathcal{T}$ is labeled by a query function of the form $\pi_i : \{0,1\}^n \to \{0,1\}$ for some $1 \leq i \leq n$, where $\pi_i$ is projection onto the $i$-th coordinate.  The decision tree complexity of $f$, denoted by $D(f)$ is the worst-case number of queries needed for a decision tree $\mathcal{T}$ to compute $f$, minimized over all such decision trees $\mathcal{T}$ that compute $f$ (see \cite{buhrman_wolf}). 

We consider a variation of this model which we call the comparison decision tree model where instead of having the vertices of a decision tree $\mathcal{T}$ labeled by projection functions, each such $v$ is labeled by a comparison function $\theta_x : \{0,1\}^n \to \{0,1\}$ where $x \in \{0,1\}^n$ and $\theta_x(y) = 1$ if and only if $y \geq x$ where here we are using the lexicographical ordering on $\{0,1\}^n$.  We additionally we allow vertices to be labeled by the constant 0 function (which can be thought of as asking if the input is less than $0^n$) -- note that the constant 1 function is given by $\theta_{1^n}$.  We call such trees comparison decision trees and define $D^{\comp}(f)$ analogously to $D(f)$ but with comparison decision trees in place of decision trees.  

We have the following immediate upper bound, identical to the trivial upper bound for the classical decision tree complexity:

\begin{proposition} \label{prop:trivial_bound_decision}
For all $f: \{0,1\}^n \to \{0,1\}$, 
$$
D^{\comb}(f) \leq n
$$
\end{proposition}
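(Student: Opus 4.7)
The plan is to exhibit an explicit comparison decision tree of depth at most $n$ that determines the input exactly (after which $f(y)$ is of course known). Since $|\{0,1\}^n| = 2^n$ and comparison queries on a totally ordered set implement binary search, this is essentially the statement that binary search over a set of size $2^n$ uses $n$ comparisons.

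More concretely, I would argue by induction on $n$, or equivalently by describing the tree directly. At the root, query $\theta_x$ for $x$ the lexicographic median (say, $x = 10\cdots 0$, the $(2^{n-1})$-th element). The answer partitions $\{0,1\}^n$ into two intervals of size $2^{n-1}$ each, namely $\{y : y < x\}$ and $\{y : y \geq x\}$. By induction, each of these intervals can be further resolved by a comparison decision tree of depth at most $n-1$: a comparison query $\theta_z$ restricted to a lex-interval $I$ still behaves as a binary query splitting $I$ into two sub-intervals, so the same binary-search strategy applies recursively. After $n$ queries the remaining interval has size $1$, so the input $y$ is determined and we label the leaf with $f(y)$.

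The base case $n = 0$ is trivial (a single leaf labeled by $f$ of the unique input, with depth $0$). The only bookkeeping detail is to check that we are permitted to ask the questions we want: a query of the form ``is $y$ in the upper half of $I$?'' is exactly $\theta_x$ for $x$ the smallest element of that upper half, which is an allowed comparison function (and in the degenerate case where we would need to split off the empty set from $\{0^n, \ldots\}$, the constant $0$ function, which is also allowed, handles it). There is no real obstacle here; the statement is essentially the observation that comparison decision trees subsume binary search over the lex order, and binary search on $2^n$ elements has depth $n$.
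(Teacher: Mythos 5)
Your proposal is correct and is essentially the paper's own argument: a binary-search comparison tree that halves the lexicographic interval at each step, determining the input exactly after $n$ queries and labeling the leaf by the value of $f$. The inductive packaging and the check that the median query is a legitimate $\theta_x$ are just slightly more explicit versions of what the paper does.
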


\begin{proof}
We construct a comparison decision tree $\mathcal{T}$ that computes $f$ in $n$ queries on all inputs.  The first query determines if the input, considered as the binary representation of a number, is in $[0, 2^{n-1}]$ or $(2^{n-1}, 2^n]$.  The next round of queries further determines which half of each of these intervals the input is in.  Continuing in this fashion, after $n$ queries, we know exactly what the input is and we label those leaves of $\mathcal{T}$ appropriately according to $f$.  
\end{proof}

We now introduce a complexity measure of boolean functions that, for comparative decision trees, completely captures the complexity.  For $f: \{0,1\}^n \to \{0,1\}$, let $\mu(f)$ denote the number of maximal connected blocks of zeroes and  of ones that occur in the length $2^n$-list of the values of $f$ on the $2^n$ different inputs, ordered with the lexicographical ordering.     

The following result gives a straightforward formula for the complexity of the comparison decision tree model and shows that $D^{\comp}(f)$ is computable in polynomial time given $f$.  Thus in some sense, the comparative decision tree model collapses to something not so interesting.  In contrast, computing $D(f)$ is known to be NP-complete \cite{hyafil_rivest}.  

\begin{theorem} \label{thm:char_dec_tree}
For all $f: \{0,1\}^n \to \{0,1\}$, 
$$
\lceil \log \mu(f) \rceil = D^{\comp}(f)
$$
\end{theorem}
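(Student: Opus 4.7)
The plan is to prove both directions of the equality separately, with the lower bound $D^{\comp}(f) \geq \lceil \log \mu(f) \rceil$ following from a structural observation about comparison decision trees, and the upper bound following from an explicit binary-search construction.

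For the lower bound, I would first observe that in any comparison decision tree $\mathcal{T}$, the set of inputs that reach a given node is always an interval in the lexicographic order on $\{0,1\}^n$. This is because each query $\theta_x$ tests membership in an interval $[x, 1^n]$, and the intersection of two intervals in a totally ordered set is again an interval. Consequently, the leaves of $\mathcal{T}$ partition $\{0,1\}^n$ into intervals $I_1 < I_2 < \cdots < I_\ell$, each of which is $f$-monochromatic (otherwise the leaf's label would be wrong on some input). Now observe that adjacent leaf-intervals with the same $f$-value can only coarsen this partition into the maximal monochromatic blocks, so $\ell \geq \mu(f)$. Since a binary tree with $\ell$ leaves has depth at least $\lceil \log \ell \rceil$, we get $D^{\comp}(f) \geq \lceil \log \mu(f) \rceil$.

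For the upper bound, I would construct a comparison decision tree of depth $\lceil \log \mu(f) \rceil$ by explicit binary search on the blocks. Write the maximal monochromatic blocks as intervals $[a_0, a_1), [a_1, a_2), \ldots, [a_{\mu-1}, a_\mu)$ where $a_0 = 0^n$, $a_\mu$ is one past $1^n$, and the $f$-values alternate (or are uniquely determined by the starting value). To determine which block contains the input $y$, perform binary search on the list of thresholds $a_1, \ldots, a_{\mu-1}$: at each step, query $\theta_{a_j}$ for the median remaining threshold to halve the candidate set of blocks. After $\lceil \log \mu(f) \rceil$ queries, a unique block remains, and the $f$-value on that block (stored in the corresponding leaf) is returned. (A minor care: some $a_j$ may equal $0^n$ at the boundary, which is handled by the constant-$0$ query explicitly allowed in the model.)

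The main obstacle, if any, is the structural claim that every leaf-region of a comparison decision tree is an interval, since this is what reduces the lower bound to elementary tree-depth arithmetic. Once that is in place, both bounds are essentially free, which is why the model ``collapses'' as the authors remark; the combined bounds yield the stated equality $\lceil \log \mu(f) \rceil = D^{\comp}(f)$.
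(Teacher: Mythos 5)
Your proof is correct and follows the same overall strategy as the paper's: show that any comparison decision tree computing $f$ has at least $\mu(f)$ leaves (hence depth at least $\lceil \log \mu(f)\rceil$), and match this with a binary search over the block boundaries. The only real difference is how you reach the intermediate claim that the number of leaves is at least $\mu(f)$: the paper argues that each of the $\mu(f)-1$ points where $f(x)\neq f(x+1)$ forces a distinct query vertex, and then converts internal-vertex count to leaf count via the full-binary-tree property, whereas you argue directly that the set of inputs reaching any node is an interval in the lexicographic order, so the leaves give a partition of $\{0,1\}^n$ into $f$-monochromatic intervals, which must refine the partition into maximal monochromatic blocks. Your version is arguably the cleaner one and is exactly the one-dimensional specialization of the geometric-rectangle argument the paper later uses for Theorem \ref{thm:cc_lb}; it also sidesteps a small imprecision in the paper's statement (the query forced by $f(x)\neq f(x+1)$ is $\theta_{x+1}$, not $\theta_x$). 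Your parenthetical about thresholds equal to $0^n$ is unnecessary (the block boundaries $a_1,\dots,a_{\mu-1}$ are all strictly greater than $0^n$) but harmless.
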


\begin{proof}
We begin by proving that $\lceil \log \mu(f) \rceil \leq D^{\comp}(f)$
The key observation is that whenever $f(x) \neq f(x+1)$, then, for any comparison decision tree $\mathcal{T}$ that computes $f$, there must be some vertex in $\mathcal{T}$ with the query function $\theta_{x}$.  Therefore, any such $\mathcal{T}$ computing $f$ must have at least $\mu(f) - 1$ query vertices.  

Since the underlying tree of $\mathcal{T}$ is a full binary tree (i.e., each non-leaf vertex has exactly two children), $\mu(f)$ is less than or equal to the number of leaves of $\mathcal{T}$.  Now the fact that $\lceil \log \mu(f) \rceil \leq D^{\comp}(f)$ follows since the number of leaves of $\mathcal{T}$ is less than or equal to $2^{\depth(\mathcal{T})}$ and $D^{\comp}(f)$ is by definition the minimum $\depth(\mathcal{T})$ for all $\mathcal{T}$ that compute $f$.  

For the other direction, namely that $\lceil \log \mu(f) \rceil \geq D^{\comp}(f)$, note that we can construct a comparative decision tree by a binary search as in the proof of Proposition \ref{prop:trivial_bound_decision} to determine which of the $\mu(f)$ blocks the input is in.  This requires at worst $\lceil \log \mu(f) \rceil$ queries.  
\end{proof}


The following two examples show that $D(f)$ and $D^{\comp}(f)$ can differ arbitrarily.  

\begin{example}
For $1^n \in \{0,1\}^n$, consider the comparison function $\theta_{1^n}$.  Then $D^{\comp}(\theta_x) = 1$, whereas $s_{1^n}(\theta_{1^n})$ and therefore $s(\theta_{1^n}) = n$ (where here $s$ denotes the sensitivity), and therefore $D(\theta_{1^n})=n$.  
\end{example}

\begin{example}
Consider the projection function $\pi_n : \{0,1\}^n \to \{0,1\}$ which gives the last coordinate.  Then $D(\pi_n) = 1$, however $\mu(\pi_n) = 2^n$ and therefore, $D^{\comp}(f) = n$.  
\end{example}

In the classical decision tree model, almost all boolean functions have maximal decision tree complexity (i.e., the limit as $n$ goes to infinity of the probability that a random boolean function $f: \{0,1\} \to \{0,1\}$ has $D(f) = n$ goes to 1 -- see, for example \cite{buhrman_wolf}).  We now show that this is not the case in the comparison decision tree model and in fact the analogous limit is equal to $1/2$:

\begin{proposition}
For all natural numbers $n$, we have
$$
\# \{f: \{0,1\} \to \{0,1\} : D^{\comp}(f) = n \} = 2^{2^n -1}
$$
\end{proposition}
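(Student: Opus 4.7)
The plan is to translate the condition $D^{\comp}(f) = n$ into a combinatorial condition on $\mu(f)$ using Theorem \ref{thm:char_dec_tree}, and then count directly via a bijection with subsets. By Theorem \ref{thm:char_dec_tree}, $D^{\comp}(f) = \lceil \log \mu(f) \rceil$, so $D^{\comp}(f) = n$ is equivalent to $2^{n-1} < \mu(f) \leq 2^n$. Since the lexicographic value list of $f$ has length $2^n$, the inequality $\mu(f) \leq 2^n$ is automatic, and the condition reduces to $\mu(f) \geq 2^{n-1} + 1$.

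Next, I would parametrize each $f \colon \{0,1\}^n \to \{0,1\}$ by the pair $(f(0^n), C_f)$, where $C_f \subseteq \{1, 2, \ldots, 2^n - 1\}$ is the set of indices $i$ at which $f$ differs on the $i$-th and $(i+1)$-th elements of $\{0,1\}^n$ in lexicographic order. This is a bijection between boolean functions on $\{0,1\}^n$ and $\{0,1\} \times 2^{\{1, \ldots, 2^n - 1\}}$, and by definition of $\mu$, one has $\mu(f) = |C_f| + 1$. So $D^{\comp}(f) = n$ corresponds exactly to $|C_f| \geq 2^{n-1}$, and the count we want becomes $2 \cdot \#\{C \subseteq \{1, \ldots, 2^n - 1\} : |C| \geq 2^{n-1}\}$.

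Finally, I would invoke the symmetry $\binom{m}{k} = \binom{m}{m-k}$ with $m = 2^n - 1$. Because $m$ is odd, this symmetry pairs each subset of size $\geq 2^{n-1}$ with a unique subset of size $\leq 2^{n-1} - 1$, with no fixed point (no middle term to worry about). Hence exactly half of the $2^{m} = 2^{2^n - 1}$ subsets of $\{1, \ldots, 2^n - 1\}$ have size at least $2^{n-1}$, giving $2^{2^n - 2}$ such subsets. Multiplying by the factor of $2$ for the choice of $f(0^n) \in \{0,1\}$ yields the stated count of $2 \cdot 2^{2^n - 2} = 2^{2^n - 1}$.

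There is no real obstacle here — the only mildly subtle point is noticing that the parity of $2^n - 1$ is exactly what makes the binomial symmetry produce a clean halving. Everything else is a direct translation through Theorem \ref{thm:char_dec_tree} and the standard run-encoding bijection.
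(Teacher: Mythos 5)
Your proof is correct and takes essentially the same route as the paper: translate $D^{\comp}(f)=n$ into $2^{n-1}<\mu(f)\leq 2^n$ via Theorem \ref{thm:char_dec_tree}, encode $f$ by its first value plus the set of break positions in $\{1,\dots,2^n-1\}$, and conclude that exactly half of the $2\cdot 2^{2^n-1}$ functions qualify. The only cosmetic difference is that you obtain the halving by the explicit complementation pairing (using that $2^n-1$ is odd) rather than by summing $2\binom{2^n-1}{k-1}$ over the upper range of $k$ as the paper does, and in doing so you state the range of $\mu$ (namely $\mu(f)\geq 2^{n-1}+1$) more carefully than the paper's displayed summation limits.
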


\begin{proof}
By Theorem \ref{thm:char_dec_tree}, we have
$$
\# \{f: \{0,1\} \to \{0,1\} : D^{\comp}(f) = n \} = \# \{f: \{0,1\} \to \{0,1\} : \mu(f) > 2^{n-1} \}
$$
Noting that 
$$
\# \{ f: \{0,1\}^n \to \{0,1\} : \mu(f) = k \} = 2 \binom{2^n - 1}{k - 1}
$$
we have 
\begin{align*}
\# \{ f: \{0,1\}^n \to \{0,1\} : \mu(f) > 2^{n-1} \} &=  \sum_{k = 2^{n-1}}^{2^n - 1}{2 \binom{2^n - 1}{k - 1}} \\ &= 2^{2^n - 1}
\end{align*}
\end{proof}

\section{Comparison 2-party deterministic communication complexity}\label{sec:comm}

We introduce here a variation of the classical 2-party deterministic communication complexity model which we will call the comparative (2-party deterministic communication complexity) model.  We will introduce a quantity analogous to $\chi$ which we will denote by $\chi_\text{geo}$ (short for ``geometric'') and we will prove the analog of the result of Yao that $D(f) \log \chi(f)$ in our setting.  

We now define the 2-party comparison communication complexity model which is a restriction of the classical 2-party communication complexity model.  In the classical model, we are given a function $f : \{0,1\}^n \times \{0,1\}^n \to \{0,1\}$ and we consider communication complexity protocols $\mathcal{P}$ that compute $f$, where each non-leaf vertex $v$ of $\mathcal{P}$ is either an arbitrary function $a_v : \{0,1,\}^n \to \{0,1\}$ or $b_v : \{0,1,\}^n \to \{0,1\}$ that operates on Alice's or Bob's inputs respectively.  The cost of such a protocol $\mathcal{P}$ is the worst case number of edges that must be traversed over all inputs, and the communication complexity of $f$, denoted $C(f)$, is the minimization over all protocols $\mathcal{P}$ that compute $f$ of the cost of $\mathcal{P}$.   We modify the allowed protocols to define the comparison model by only allowing protocols $\mathcal{P}$ where all of the vertex functions are of the form $a_v= \theta_{x_v}$ or $b_v = \theta_{x_v}$ (operating on Alice's or Bob's inputs respectively) for some $x_v \in \{0,1\}^n$ (where, as in the previous section, $\theta_{x_v}(y) = 1$ if and only if $y \geq x$). Additionally, as in the previous comparison decision tree model, we allow also for the vertices to be labeled by the constant 0 function (note the constant one function is $\theta_{1^n}$.  We define the cost of a comparative communication protocol and the comparative communication complexity of $f$, denoted $C^{\comp}(f)$ in direct analogy with the classical case -- namely the minimum over all comparison protocols $\mathcal{P}$ that compute $f$ of $\depth(\mathcal{P})$.   

Since any a comparative communication protocol is also classical communication protocol, we always have 
$$
C(f) \leq C^{\comp}(f)
$$

We have the following upper bound:

\begin{theorem}
For all $f : \{0,1\}^n \times \{0,1\}^n \to \{0,1\}$, we have 
$$
C^{\comp}(f) \leq 2n + 1
$$
\end{theorem}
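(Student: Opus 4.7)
The plan is to exhibit an explicit comparison protocol in which Alice and Bob each binary-search their own input in turn. Since the lexicographic order on $\{0,1\}^n$ agrees with the binary-numeric order, I would reuse the binary-search construction from the proof of Proposition \ref{prop:trivial_bound_decision}, once for each party. Concretely, I would build the protocol tree so that its first $n$ levels are labeled by comparison functions acting only on Alice's input $x$: at the root, Alice queries $\theta_{10^{n-1}}(x)$, revealing the first bit of $x$; at each subsequent level, she queries the lexicographic midpoint of the current candidate interval for $x$, which reveals the next bit. After these $n$ queries the path from the root uniquely determines $x$. The next $n$ levels are then analogous comparison queries on Bob's input $y$, so that at depth $2n$ every leaf corresponds to a unique pair $(x,y)$ and can simply be labeled with $f(x,y)$.

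I do not expect any real obstacle. The only fact to verify is that a single $n$-bit input can be pinpointed by a sequence of $n$ midpoint comparisons in lexicographic order, which is exactly the binary search already used in the proof of Proposition \ref{prop:trivial_bound_decision}. The protocol I described has depth $2n$, which is bounded by $2n+1$, giving the claimed inequality. The slack of $+1$ in the stated bound leaves room for a minor variant of the construction; in any case, this crude bound is not the substantive content of the section, which is the sharper $\chi^{\geom}$-based characterization of $C^{\comp}(f)$ to follow.
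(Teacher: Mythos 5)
Your proposal is correct and matches the paper's own argument: both proofs have Alice pin down her input with $n$ midpoint comparison queries, then Bob do the same with his, and label the resulting leaf by $f(x,y)$, giving depth at most $2n$ (within the stated bound of $2n+1$). No gap here.
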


\begin{proof}
We will reason as usual with Alice and Bob, however, it is important to bear in mind that they are no longer all powerful (as in the the classical model).  In particular, even once Bob knows Alice's input, he might still need some time using comparison functions to determine the output.

Alice begins by telling Bob her input by first telling Bob if it represents a number in the interval $[0,2^{n-1}]$ or not.  If yes, she then says whether her input represents a number in the interval $[0, 2^{n-2}]$ or not.  If no, she says whether her input is in the interval $(2^{n-1}, 2^{n-1} + 2^{n-2}]$ or not.  All this is accomplished using appropriate comparison functions and after the application of at most $n$ comparison functions, Alice will have told Bob her input.  

Now through an analogous process, Bob can determine his input by a similar process and then the leaf of the resulting tree can be labeled by the appropriate constant.  
\end{proof}

A subset $R \subset X \times Y$ is called a combinatorial rectangle if $R = A \times B$ for some sets $A \subset X$ and $B \subset Y$.  A tiling of $X \times Y$ is a partition of $X \times Y$ into a disjoint collection of combinatorial rectangles.  Given a function $f : X \times Y \to Z$, a combinatorial rectangle $S \subset X \times Y$ is $f$-monochromatic if $f$ is constant on $S$.  A tiling is $f$-monochromatic if all of the combinatorial rectangles in the tiling are $f$-monochromatic. 

In contrast to combinatorial rectangles, if $X$ and $Y$ are both given fixed total orderings $<_X$ and $<_Y$ respectively, then a geometric rectangle is subset $R = A \times B \subset X \times Y$ with the property that $A = \{ a \in A : a_\text{mi} \leq a \leq a_\text{max} \}$ and such that $B = \{ b \in B : b_\text{mi} \leq b \leq b_\text{max} \}$ for some $a_\text{min},a_\text{max} \in A$ and $b_\text{min}, b_\text{max} \in B$.  This terminology of geometric rectangles comes from the fact that such sets are honest connected rectangles (in the everyday usage sense) when considered as subsets of the matrix $M_{<_X, <_Y}(f)$ whose $(i,j)$-entry is the value $f(a_i, a_j)$ (here assuming $X$ and $Y$ are finite).  

Specializing to our familiar setting with $f: \{0,1\}^n \times \{0,1\}^n \to \{0,1\}$, a lower bound for the classical 2-party communication complexity of $f$, $C(f)$, dating back to Yao's original article introducing communication complexity is the invariant $\chi(f)$ which is defined as the minimum number of tilings of any $f$-monochromatic tiling of $\{0,1\}^n \times \{0,1\}^n$.  By taking any protocol $\mathcal{P}$ that computes $f$ and  labeling every entry of element $(x,y) \in \{0,1\}^n \times \{0,1\}^n$ by the communication history between Alice and Bob when the protocol is run on the input $(x,y)$, Yao observed that the subset of $X \times Y$ with a given communication history is not arbitrary, but rather is always an $f$-monochromatic combinatorial rectangle, thus leading to the bound $D(f) \geq \log \chi(f)$.  

We now prove an analogous result in the setting of comparison communication complexity with geometric rectangles in place of combinatorial rectangles. Let $\chi^{\geom}(f)$ denote the minimal number of $f$-monochromatic geometric rectangles needed to tile $\{0,1\}^n \times \{0,1\}^n$, where here and from now on we are using the lexicographical ordering on $\{0,1\}^n$.  The proof is completely analogous to the proof of Yao's result in \cite{yao1979} -- see also Exercise 1.8 of \cite{nisan_kushilevitz}.  

\begin{theorem} \label{thm:cc_lb}
For all $f : \{0,1\}^n \times \{0,1\}^n \to \{0,1\}$, we have 
$$
\log \chi^{\geom}(f) \leq C^{\comp}(f)
$$
\end{theorem}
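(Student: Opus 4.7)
The plan is to mimic Yao's classical rectangle argument, with the single new ingredient being that a comparison protocol carves the input space into \emph{geometric} rectangles rather than arbitrary combinatorial ones. Fix a comparison protocol $\mathcal{P}$ computing $f$ with $\depth(\mathcal{P}) = C^{\comp}(f)$, and for every leaf $\ell$ of $\mathcal{P}$ let $R_\ell \subseteq \{0,1\}^n \times \{0,1\}^n$ denote the set of inputs whose execution of $\mathcal{P}$ terminates at $\ell$. Because leaves are labeled by constants and $\mathcal{P}$ computes $f$, each $R_\ell$ is $f$-monochromatic; because $\mathcal{P}$ is deterministic, the collection $\{R_\ell\}$ partitions $\{0,1\}^n \times \{0,1\}^n$; and because $\mathcal{P}$ is a binary tree, the number of leaves is at most $2^{\depth(\mathcal{P})} = 2^{C^{\comp}(f)}$.

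The main (and only nontrivial) step is to verify that each $R_\ell$ is a geometric rectangle. Walk along the path from the root of $\mathcal{P}$ to $\ell$. Each non-leaf vertex $v$ on this path is labeled by a comparison query $\theta_{x_v}$ applied either to Alice's input or to Bob's input, and the edge chosen at $v$ records whether the queried input is $\geq x_v$ or $< x_v$. Collecting the constraints coming from Alice's vertices, the set $A_\ell \subseteq \{0,1\}^n$ of Alice-inputs consistent with the path is an intersection of sets of the form $\{a : a \geq x_v\}$ and $\{a : a < x_v\}$; each such set is an interval in the lexicographic order on $\{0,1\}^n$, and the intersection of intervals in a totally ordered set is again an interval. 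The identical argument applied to Bob's queries gives an interval $B_\ell \subseteq \{0,1\}^n$. Since Alice's queries depend only on her input and Bob's only on his, $R_\ell = A_\ell \times B_\ell$, so $R_\ell$ is a geometric rectangle in the sense defined in the excerpt.

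Putting the pieces together, $\{R_\ell\}_\ell$ is an $f$-monochromatic tiling of $\{0,1\}^n \times \{0,1\}^n$ by geometric rectangles of cardinality at most $2^{C^{\comp}(f)}$. By the definition of $\chi^{\geom}(f)$, this yields $\chi^{\geom}(f) \leq 2^{C^{\comp}(f)}$, and taking logarithms gives the desired inequality. No serious obstacle is expected; the main point where one must be attentive is the interval claim for $A_\ell$ and $B_\ell$, which depends on the fact that comparison protocols can only cut Alice's (resp.\ Bob's) input by thresholds in the fixed lexicographic order, so they are powerless to express a non-interval set on one coordinate.
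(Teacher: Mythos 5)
Your proposal is correct and follows essentially the same route as the paper: both arguments reduce to showing that the set of inputs consistent with a root-to-leaf path is a product of two lexicographic intervals (the paper phrases this as an induction on vertex depth, you phrase it as intersecting the threshold constraints along the path, which is the same observation), and then both bound the number of leaves by $2^{\depth(\mathcal{P})}$.
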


\begin{proof}
Let $\mathcal{P}$ be a communication protocol that computes $f$.  We prove that for each vertex $v$ in $\mathcal{P}$, the set of all inputs that go through $v$ when given to $\mathcal{P}$ forms a geometric rectangle.  For each leaf vertex, these rectangles are then $f$-monochromatic.  

We proceed by induction on the depth of the vertex $v$.  For the root vertex, the corresponding set is all of $\{0,1\}^n \times \{0,1\}^n$ and hence a geometric rectangle.  Given a vertex $v$ with depth greater than 0, we let $v'$ denote the parent vertex of $v$.  Let $R_v$ and $R_{v'}$ denote the sets of all inputs that pass through $v$ and $v'$.  By the inductive hypothesis, we have $R_v = [x_{\min}, x_{\max}] \times [y_{\min}, y_{\max}]$ for some $x_{\min}, x_{\max}, y_{\min}, y_{\max} \in \{0,1\}^n$.  Without loss of generality, we assume that the vertex $v'$ is one of Alice's vertices and we let $\theta_{x'}$ denote the corresponding query function.  There are three possibilities: (1) $x' < x_{\min}$, (2) $x' \geq x_{\max}$, or (3) $x_{\min} \leq x_{v'} \leq x_{\max}$.  In case (1), $R_v$ is empty.  In case (2), $R_v = R_{v'}$.  In case (3), $R_v = [x_{\min}, x'] \times [y_{\min}, y_{\max}]$.  In any case, $R_{v'}$ is again a geometric rectangle.  

Therefore, we obtain from $\mathcal{P}$ a geometric tiling of $\{0,1\}^n \times \{0,1\}^n$ with the number of tiles being equal to the number of leaves of $\mathcal{P}$.  Since the number of leaves of $\mathcal{P}$ is at most $2^{\depth(\mathcal{P})}$, then $2^{\depth(\mathcal{P})} \geq \chi^{\geom}(f)$.  Since this holds for all protocols $\mathcal{P}$ computing $f$, the result follows.  
\end{proof}

Just as Yao's tiling number lower bound extends trivially to communication protocols for computing relations (as used for example in proving circuit depth lower bounds), the above result extends as well to comparison communication protocols for relations.

We now give an example illustrating that $C(f)$ and $C^{\comb}(f)$ can differ arbitrarily.  

\begin{example}
Let $\PARITY_{2n} : \{0,1\}^n \times \{0,1\}^n \to \{0,1\}$ with $\PARITY_{2n}(x,y)$ equal to the party of the number of 1's in $(x,y)$.  Then the protocol where Alice sends Bob the parity of her input and Bob then adds that to the parity of his input shows that $C(\PARITY_{2n}) = 2$.  For contrast, we have $\chi^{\geom}(\PARITY_{2n}) = 2^{2n}$ and therefore $C^{\comp}(\PARITY_{2n}) \geq 2n$.  

Note that this example shows that the analog of the log-rank conjecture fails for the comparative communication complexity model since the rank of $\PARITY_{2n}$ is 2 for $n \geq 2$.  
\end{example}

\begin{theorem} \label{thm:cc_main}
For all $f : \{0,1\}^n \times \{0,1\}^n \to \{0,1\}$, we have $C^{\comp}(f) = \Theta( \log \chi^{\geom}(f))$.
\end{theorem}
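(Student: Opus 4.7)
Since the lower bound $\log \chi^{\geom}(f) \leq C^{\comp}(f)$ is already established in Theorem \ref{thm:cc_lb}, the plan is to prove the matching upper bound $C^{\comp}(f) = O(\log \chi^{\geom}(f))$ by constructing, from an optimal $f$-monochromatic geometric tiling $T$ with $t = \chi^{\geom}(f)$ tiles, a comparison protocol of depth $O(\log t)$.

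The key structural observation is that the axis-aligned nature of geometric rectangles lets the protocol split cleanly into two sequential one-dimensional binary searches, the first performed entirely by Alice and the second entirely by Bob. Let $E_x$ consist of $0^n$, $1^n$, and all $x$-coordinates appearing as the left or right edge of some tile of $T$; then $|E_x| \leq 2t + 2$, and the elements of $E_x$ partition $\{0,1\}^n$ into $K \leq 2t + 1$ consecutive intervals $S_1, \ldots, S_K$, which I will call ``$x$-strips''. By the maximality of these strips with respect to $E_x$, any tile $\tau$ that intersects $S_j \times \{0,1\}^n$ must have its $x$-range contain all of $S_j$ (otherwise a new tile-edge coordinate would appear strictly inside $S_j$, contradicting the definition of the strip). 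Consequently the tiles touching $S_j$ all ``run across'' it in the $x$-direction, so their $y$-ranges partition $\{0,1\}^n$ into $s_j \leq t$ consecutive $f$-monochromatic intervals.

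The protocol is then immediate. First, Alice uses $\lceil \log K \rceil$ comparison queries on $x$ to binary-search among the strip boundaries and determine the unique index $j$ with $x \in S_j$; each boundary lies in $\{0,1\}^n$, so these are legitimate comparison queries, and Bob learns $j$ from the communication history. Next, Bob uses $\lceil \log s_j \rceil$ comparison queries on $y$ to binary-search for the unique tile of strip $S_j$ whose $y$-range contains $y$, and the resulting leaf is labeled with the constant value of $f$ on that tile. The total depth is at most $\lceil \log(2t + 2) \rceil + \lceil \log t \rceil = O(\log t) = O(\log \chi^{\geom}(f))$, which together with Theorem \ref{thm:cc_lb} establishes the claim.

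I do not foresee substantial technical obstacles: the argument reduces to two routine binary searches. The only point that requires care is verifying that within each $x$-strip every covering tile really does span the entire strip, which is a direct consequence of the definition of geometric rectangles together with the choice of $E_x$ as the set of tile-edge coordinates. Degenerate cases (a width-one strip, a tile equal to the whole grid, a single-tile tiling, and so on) all fit into the same argument without modification.
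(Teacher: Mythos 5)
Your proposal is correct and follows essentially the same approach as the paper: the lower bound is quoted from Theorem \ref{thm:cc_lb}, and the upper bound is obtained by having Alice binary-search among the $x$-coordinates of tile edges and Bob then binary-search among the tiles meeting the resulting strip, for a total depth of $O(\log \chi^{\geom}(f))$. Your write-up is in fact a bit more careful than the paper's (which simply extends the tile edges to full grid lines and has each party search those lines), but the underlying protocol is the same.
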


\begin{proof}
From Theorem \ref{thm:cc_lb}, it suffices to show that $C^{\comp}(f) = \Omega(\log \chi^{\geom}(f))$.  Suppose we are given a geometric $f$-monochromatic tiling of $\{0,1\}^n \times \{0,1\}^n$, $T$ where $T$ has $\chi^{\geom}(f)$ tiles.  We now construct a comparison protocol $\mathcal{P}$ for $f$ using $T$.  

Visualizing $T$ as a tiling of the matrix $M(f)$, we see that the tops and bottoms of the rectangles extend to lines, of which there are at most $\chi^{\geom} + 1$.  The comparison protocol begins with Alice using comparisons in a binary search fashion to tell Bob between which two horizontal lines constructed from $T$ her input is between.  This requires $\lceil \log \chi^{\geom}(f) \rceil$ comparison queries.  Similarly, Bob can determine the tile in $T$ that the input is in by using $\lceil \log \chi^{\geom}(f) \rceil$ comparison queries corresponding to the vertical lines coming from extending the tiles of $T$.  After this, the leaf is labeled with the appropriate output corresponding to if the corresponding tile $t \in T$ is a 0-rectangle or a 1-rectangle.  Thus in total, the cost of this protocol $\mathcal{P}$ is $2 \lceil \log \chi^{\geom}(f) \rceil$.
\end{proof}

In \cite{AhoUllmanYannakis}, Aho, Ullman, and Yannnkakis proved that $D(f) = O(\log^2 \chi(f))$.  In \cite{nisan_kushilevitz}, the question is raised if this square factor can be removed -- namely, if $D(f) = O(\log \chi(f))$ (which would imply $D(f) = \Theta( \log \chi(f) )$ are in fact equal), however this was disproved in \cite{goos2018deterministic}.  The question of exactly how small the exponent can be made is still open \cite{rao_yehudayoff}.  In light of this, theorem \ref{thm:cc_main} highlights the distinction between $\chi$ and $\chi^{\geom}$ by showing that in the context of comparison communication complexity and $\chi^{\geom}$, the exponent can be taken to be 1.  

Komlós proved that a random (0,1)-matrix of size $m \times m$ has rank $m$ (over $\mathbb{Q}$) with probability tending to $1$ as $m$ goes to infinity \cite{komlos}.  Therefore, using the rank lower bound for classical communication complexity \cite{mehlhorn}, almost all functions are hard from the perspective of classical communication complexity (i.e., if $f: \{0,1\}^n \times \{0,1\}^n \to \{0,1\}$ then $C(f) = n+1$).  Since $C(f) \leq C^{\geom}(f)$, this implies that almost all functions $f : \{0,1\}^n \times \{0,1\}^n \to \{0,1\}$ have $C^{\comp}(f) \geq n$.  We are not sure to what extent this can be improved.  

The quantity $\chi^{\geom}(f)$ can be computed in polynomial-time given $f$ by finding the minimal number of rectangles needed to tile the planar regions given by the region of the matrix $M(f)$ consisting of zeroes and the the region of the matrix $M(f)$ consisting of ones.  These quantities can be found in polynomial time using for example the algorithms for finding minimal rectangulations of planar regions that can be built from rectangles as in section 3 of \cite{eppstein_review} and the references therein.

\bibliographystyle{plain} 
\bibliography{main}

\begin{thebibliography}{10}

\bibitem{AhoUllmanYannakis}
Alfred~V Aho, Jeffrey~D Ullman, and Mihalis Yannakakis.
\newblock On notions of information transfer in {VLSI} circuits.
\newblock In {\em Proceedings of the fifteenth annual ACM symposium on Theory
  of computing}, pages 133--139, 1983.

\bibitem{buhrman_wolf}
Harry Buhrman and Ronald de~Wolf.
\newblock Complexity measures and decision tree complexity: a survey.
\newblock {\em Theoretical Computer Science}, 288(1):21--43, 2002.

\bibitem{eppstein_review}
David Eppstein.
\newblock Graph-theoretic solutions to computational geometry problems.
\newblock In {\em International Workshop on Graph-Theoretic Concepts in
  Computer Science}, pages 1--16. Springer, 2009.

\bibitem{goos2018deterministic}
Mika G\"o\"os, Toniann Pitassi, and Thomas Watson.
\newblock Deterministic communication vs. partition number.
\newblock {\em SIAM Journal on Computing}, 47(6):2435--2450, 2018.

\bibitem{hyafil_rivest}
Laurent Hyafil and Ron Rivest.
\newblock Constructing optimal binary search trees is np complete.
\newblock {\em Information Processing Letters}, 1976.

\bibitem{komlos}
J{\'a}nos Koml{\'o}s.
\newblock On the determinant of (0-1) matrices.
\newblock {\em Studia Scientiarium Mathematicarum Hungarica}, 2:7--21, 1967.

\bibitem{mehlhorn}
Kurt Mehlhorn and Erik~M Schmidt.
\newblock {Las Vegas is better than determinism in VLSI and distributed
  computing}.
\newblock In {\em Proceedings of the fourteenth annual ACM symposium on Theory
  of computing}, pages 330--337, 1982.

\bibitem{nisan_kushilevitz}
Noam Nisan and Eyal Kushilevitz.
\newblock {\em {C}ommunication {C}omplexity}.
\newblock Cambridge University Press, Cambridge, 1997.

\bibitem{rao_yehudayoff}
Anup Rao and Amir Yehudayoff.
\newblock {\em Communication Complexity and Applications}.
\newblock Cambridge University Press, 2020.

\bibitem{yao1979}
Andrew Chi-Chih Yao.
\newblock Some complexity questions related to distributive computing
  (preliminary report).
\newblock In {\em Proceedings of the eleventh annual ACM symposium on Theory of
  computing}, pages 209--213, 1979.

\end{thebibliography}

\end{document}